\documentclass[aps,pra,onecolumn,nofootinbib,superscriptaddress]{revtex4}
\usepackage{amsmath,amssymb,amsthm}
\usepackage{amsfonts}
\usepackage{amssymb}
\usepackage[final]{graphicx}
\usepackage{color}
\usepackage[a4paper, total={6in, 8in}]{geometry}
\usepackage{ulem}

\newtheorem{theorem}{Theorem}

\newtheorem{prop}[theorem]{Proposition}

\usepackage[title]{appendix}
\usepackage{float}
\usepackage[caption = false]{subfig}
\DeclareMathOperator{\Tr}{Tr}

\begin{document}

\title{Entropic Bounds as Uncertainty Measure of Unitary Operators}

\author{Jesni Shamsul Shaari}
\affiliation{Faculty of Science, International Islamic University Malaysia (IIUM),
Jalan Sultan Ahmad Shah, Bandar Indera Mahkota, 25200 Kuantan, Pahang, Malaysia}
\author{Rinie N. M. Nasir}
\affiliation{Faculty of Science, International Islamic University Malaysia (IIUM),
Jalan Sultan Ahmad Shah, Bandar Indera Mahkota, 25200 Kuantan, Pahang, Malaysia}

\author{Stefano Mancini}
\affiliation{School of Science \& Technology, University of Camerino, I-62032 Camerino, Italy}
\affiliation{ INFN Sezione di Perugia, I-06123 Perugia, Italy}
\date{\today}

\begin{abstract}
We reformulate the notion of uncertainty of pairs of unitary operators within the context of guessing games and derive an entropic uncertainty relation for a pair of such operators. We show how distinguishable operators are compatible while maximal incompatibility of unitary operators can be connected to bases for some subspace of operators which are mutually unbiased.  
\end{abstract}

\pacs{03.65.Aa, 03.67.-a}

\maketitle

\section{Introduction}
The notion of uncertainty, or incompatibility has almost become synonymous with quantum theory. The most  immediate case would be the uncertainty for a pair of noncommuting observables captured in the famous inequality \cite{scho} for observables $A$ and $B$ given by $\Delta A\Delta B\geq \langle\psi|[AB-BA]|\psi\rangle|/2$ with the uncertainties in observables $O$ given as standard deviations, $\Delta O=\sqrt{|\langle\psi|O^2|\psi\rangle|-|\langle\psi|O|\psi\rangle|^2}$, for a quantum state $|\psi\rangle$.

This has to some extent motivated the study of uncertainty of pairs of unitary operators which obey a specific commutation relation \cite{maasar}. A generalisation to include arbitrary pairs of unitary operators with tighter bounds was later reported \cite{bagchi}. In these works, the uncertainty of an operator, $U$, was defined in an analogous way to the uncertainties of observables, in terms of variances, $\Delta U=\sqrt{1-|\langle\psi|U|\psi\rangle|^2}$. The uncertainty defined this way can be more precisely understood as the uncertainty in state $|\psi\rangle$ associated to $U$ in the context of the overlap between the state  $|\psi\rangle$ and the evolved state $U|\psi\rangle$. Extending this to a pair of operators, $U, V$, an uncertainty relation would supposedly reflect the distinguishability of two distinct unitary state evolutions \cite{bagchi}.

However, notwithstanding the motivations mentioned in the Refs. \cite{maasar,bagchi} of defining $\Delta U$ as above, the meaning of \textit{uncertainty of unitary operators} in reference to such a definition can be somewhat ambiguous. We shall refer to a simple example for our argument, the Pauli unitary operators, $\sigma_x$ and $\sigma_z$ operating on the two dimensional Hilbert space, $\mathcal{H}_2$. The uncertainty relation $\Delta\sigma_x^2+\Delta\sigma_z^2\ge 1$ then was reinterpreted in \cite{maasar} as an uncertainty relation for the Mach-Zender interferometer. However, in terms of unitary operators, it can be a little unclear what this greater than zero value would imply as such a pair of unitary operators actually can result in distinguishable states for a specific input. Namely, for input  $|0\rangle$ (in terms of the computational basis), for which it is easy to note that $|\langle 0|\sigma_x\sigma_z|0\rangle|=0$. Thus the distinguishability of the evolved states is perfect. This is hardly surprising as the Pauli operators are actually orthogonal and can be distinguished perfectly.

Therefore, we shall consider in this work a more `operational' notion for the uncertainty of unitary operators. Intuitively, we would like some definition such that the notion would reflect the uncertainty one can have when \textit{testing} such an operator. Testing itself can be understood as a means to ascertain the nature of action the operator has on some quantum state. In other words, if one can \textit{test} perfectly such an operator, then the uncertainty would be zero.  Extending to the case of testing a pair of operators, we require our formulation to give zero uncertainty ONLY when the \textit{tester} has zero uncertainty with respect to both the operators while perfectly distinguishing the operators. 
The framework of testing operators is well established in the context of Process Positive Operator Value Measurement (PPOVM) \cite{ziman} or quantum tester \cite{bisio, sedlak}. This can be understood as analogous to what POVMs are to quantum states.

As for the notion of uncertainty, we will resort to the most obvious Shannon entropy, defined as $H(x)=-\sum p(x)\log{p(x)}$ for $x$ being the outcome of the tester occurring with probability $p(x)$. We note that our work is essentially motivated  by the use of entropic uncertainty relations for observables of finite dimensional Hilbert spaces \cite{D,maassen, step, coles}. The entropic uncertainty relation for such observables was first proposed by \cite{D} and later refined by \cite{maassen} to eventually replace the older relations based on standard deviations. It is more often described in terms of guessing games; i.e. one party, say Alice, provides another, say Bob with some arbitrary state for which the latter would measure in either one of two measurement basis. Bob then informs Alice of his choice of measurement and she then guesses the result of his measurement. Her guesses for both measurements cannot be free of uncertainty if the measurements are incompatible. Inspired by that, we will formulate the notion of uncertainty of unitary operators pairs in our work, within the context of some guessing game.    
The main results of our work highlighted as:
\begin{enumerate}
\item uncertainty of unitary operators simply reflects the distinguishability of the operators. The incompatibility of observables can be connected to the distinguishability of unitary operators as follows: if unitary operators $U$ and $V$ are (not perfectly) distinguishable, then the measurement operators defined by $\{U^\dagger|\chi_i\rangle\langle\chi_i|U\}$ and $\{V^\dagger|\chi_i\rangle\langle\chi_i|V\}$ are incompatible for any projective measurement, $\{|\chi_i\rangle\langle\chi_i|\}$. Unless stated otherwise, we assume throughout the text, projective measurements such that $\langle\chi_i|\chi_j\rangle=\delta_{ij}$ and $\{|\chi_i\rangle\}$ is a complete basis for the $d$ dimensional Hilbert space.
\item if the uncertainty saturates the maximal entropic bound (i.e. $\log{d}$) for some pair $U$ and $V$, then the measurement operators defined by $\{U^\dagger|\chi_i\rangle\langle\chi_i|U\}$ and $\{V^\dagger|\chi_i\rangle\langle\chi_i|V\}$ are maximally incompatible. In the special case where each of these operators comes from two distinct unitary bases and the maximal entropic bound is reached for any pairs thus formed, then the bases are mutually unbiased \cite{scott,jssarxiv,arxiv}.
\end{enumerate}
We should note the difference (though related) work of Ref.\cite{annals} where an entropic relation was defined for a pair of unitary testers. In the mentioned reference, an uncertainty relation using Shannon entropy was defined for a tester pair in an analogous way to that of a pair observables. In simple terms, it reflected the uncertainty when considering a pair of testers for some (common) unitary tested. This is markedly different from the current work which defines uncertainty of results when a pair of unitary operators are tested by a common tester. As such it complements Ref.\cite{annals} and, we shall see later how some of the resulting propositions relate to those of Ref.\cite{annals}.

The outline of the work is as follows. We begin with a brief review of testers, mutually unbiased bases (MUB) and mutually unbiased unitary bases (MUUB). We then commit to a thorough study of uncertainty of pairs of operators using testers with a (pure) $d$-dimensional input state and projective measurements. This would include propositions related to the distinguishability of the operators and another which connects maximal incompatibility to MUUB. We further present a short section on using testers with maximally entangled input states as well as those using POVMs for measurement purposes before concluding the work. 
\section{MUBs, MUUBs and Testers}
\noindent Given  the measurement bases $\mathbb{V}=\{|\mathbb{V}_i\rangle\langle\mathbb{V}_i|\}$ and $\mathbb{W}=\{|\mathbb{W}_i\rangle\langle\mathbb{W}_i|\}$, with $\{|\mathbb{V}_i\rangle\}$ and $\{|\mathbb{W}_j\rangle\}$ being distinct bases for $\mathcal{H}_d$, the notion of incompatibility of measurements  is captured through the well known uncertainty relation \cite{maassen}, 
\begin{eqnarray}\label{ents}
H(\mathbb{V})+H(\mathbb{W})\ge -\log{\max_{i,j}|\langle \mathbb{V}_i|\mathbb{W}_j\rangle}|^2
\end{eqnarray}
with $H(\cdot)$ the Shannon entropy of the outcomes. The bound is saturated if and only if $\{|\mathbb{V}_i\rangle\}$ and $\{|\mathbb{W}_j\rangle\}$ are mutually unbiased to one another \cite{step,coles}; i.e, 
\begin{eqnarray}
|\langle\mathbb{W}_i|\mathbb{V}_j\rangle|=1/\sqrt{d}, ~\forall i,j,=0,\ldots, d-1.
\end{eqnarray}
These bases, referred to as mutually unbiased bases (MUB), are described by having equiprobable transition between states from one basis to the other \cite{durt}.

An analogous structure to MUB for the space, $\mathcal{L}(\mathcal{H}_d)$, i.e. the space of linear operators on $\mathcal{H}_d$  was studied in ref.\cite{scott,jssarxiv,arxiv} as mutually unbiased unitary basis (MUUB). The definition carries similarly, the notion of inner product (in this case, the Hilbert-Schmidt inner-product);  
two distinct orthogonal basis, $\{P_0,...,P_{D-1}\}$ and $\{Q_0,...,Q_{D-1}\}$ comprising of unitary transformations for some $D$-dimensional subspace of the space $\mathcal{L}(\mathcal{H}_d)$ are sets of MUUB if 
\begin{eqnarray}\label{MUUBdef}
\left|\text{Tr}(P_i^{\dagger}Q_j)\right |^2=\kappa~~,~~\forall i,j,=0,\ldots, D-1
\end{eqnarray}
for some constant, $\kappa$. This constant takes on value $1$ and $d$ for the $d^2$ and $d$ dimensional subspaces of $\mathcal{L}(\mathcal{H}_d)$ respectively \cite{jssarxiv,arxiv}. 

The discrimination of operators can be done using testers. Physically, a tester consist of some known quantum state which acts as a probe that undergoes the action of some (unknown) operator and subsequently, measurements are made to determine how the state has evolved; thus one may infer properties of the operator. Testers can then be described as a pair, $T=(\Psi,\{F_i\})$ where $\Psi$ is the known quantum state and $\{F_i\}$ is a POVM for the measurement purpose. In principle, $\Psi$ can be any quantum state, although it should preferably be pure; thus avoiding further uncertainties in determining an operator. 
We note that this is the same description used in Ref.\cite{annals} and is sufficient for our purpose. We refer the interested reader to Ref.\cite{ziman} for a more precise definition in terms of PPOVM.


\section{Entropic Bounds for Pairs of Unitary Testing}
\noindent In what follows, we shall make use of testers with a pure $d$-dimensional input state, $|\psi\rangle\in \mathcal{H}_d$ and measurement operators $\{|\chi_i\rangle\langle\chi_i|\}$ assumed to be orthogonal projectors.  As noted earlier, we formulate the notion of uncertainty of unitary operators in terms of testers within the framework of a guessing game as follows:\\
\\
\textit {Consider a pair of unitary operators $V,W\in SU(d)$. Let Alice prepare a tester, $\mathcal{T}=(|\psi\rangle, \{|\chi_i\rangle\langle\chi_i|\})$, to distinguish between the two operators and submits the tester to Bob. Bob then would randomly test either $V$ or $W$ and then informs Alice of the operator he used. Alice then needs to guess Bob's results.} 
\\\\
Assuming the unitary operators were chosen with equal probability, and the tester used as above, Alice's average uncertainty of Bob's result would be given as
\begin{eqnarray}\label{un}
H(\mathcal{T}|V)+H(\mathcal{T}|W)=-\sum_i|\langle\chi_i|V|\psi\rangle|^2\log{|\langle\chi_i|V|\psi\rangle|^2}\\\nonumber-\sum_i|\langle\chi_i|W|\psi\rangle|^2\log{|\langle\chi_i|W|\psi\rangle|^2}
\end{eqnarray}
The right hand side of the above equation can be understood equivalently as the average uncertainty of  making measurements using the measurement operators $\mathcal{V}=\{V^\dagger|\chi_i\rangle\langle\chi_i|V\}$ and $\mathcal{W}=\{W^\dagger|\chi_i\rangle\langle\chi_i|W\}$ given the state $|\psi\rangle$.
Hence 
\begin{eqnarray}\label{un}
H(\mathcal{T}|V)+H(\mathcal{T}|W)=H(\mathcal{V}||\psi\rangle)+H(\mathcal{W}||\psi\rangle)
\end{eqnarray}
and we can immediately apply the well-known entropic bound for measurement operators (independent of $|\psi\rangle$)
and thus we write an inequality for eq.(\ref{un}) as 
\begin{eqnarray}\label{ent}
H(\{|\chi_i\rangle\langle\chi_i|\}|V)+H(\{|\chi_i\rangle\langle\chi_i|\}|W)\geq -\log{\max_{i,j}{|\langle\chi_i|WV^\dagger|\chi_j\rangle|^2} }.
\end{eqnarray}
We see that this relation is dependent on the measurement operator of the tester (for any possible input of the tester,  thus we write $\{|\chi_i\rangle\langle\chi_i|\}$ for notation rather than $\mathcal{T}$). This is as it should be, i.e. it reflects the physical fact that \textit{any} uncertainty is borne out of the measurements and not due to the deterministic nature of unitary evolutions. Hence an uncertainty relation in the form of entropic inequality makes sense only when referred to the tester involved.

The entropic relation reflect Alice's uncertainty in guessing Bob's measurement results. Incompatibility of unitary operators is thus interpreted as Alice's inability to prepare a tester to distinguish between the operators which would give zero uncertainty for both $V$ and $W$.
Notwithstanding this, the relation has another clear physical meaning. It is  in fact, the uncertainty relation for a common measurement operator type used by the two parties independently, whose reference frames are related by a certain unitary relation. Let Alice and Bob independently measure the observable, $\{|\chi_i\rangle\langle \chi_i|\}$, in their respective locality for a given state sent by Alice to Bob (or identical states by a third party to both). If their reference frames are rotated to one another, such that any state prepared by Alice is viewed by Bob as rotated by the operator $WV^\dagger$, then the entropic uncertainty relation describes the uncertainty of their measurement outcomes.


For a given measurement operator, $\{|\chi_i\rangle\langle\chi_i|\}$, one can obviously derive an infinite number of testers by having different inputs. However, only a specific subset of those, if exists, may achieve the minimal entropic value, $-\log{\max_{i,j}{|\langle\chi_i|WV^\dagger|\chi_j\rangle|^2} }$. We refer to such testers as \textit{saturating testers}. 

Note that for any unitary operator pair $V$ and $W$, one can always have zero uncertainty, i.e. when one considers a tester such that the measurement $\{|\chi_i\rangle\langle\chi_i|\}$ projects onto the eigenvectors of $WV^\dagger$ for some specific input. This is quite obvious and we refer to testers with such measurement operators as \textit{trivial testers for the unitary operators pair $V$ and $W$}. These trivial testers unfortunately have no real operational use. It tells us nothing about the difference between $V$ and $W$, beyond being a passive operation for some input of the tester. In what follows, we shall exclude trivial testers from our considerations.


\subsection{The Minimal Bound; Perfect Discrimination}

\begin{prop}
Consider a pair of unitary operators, $V$ and $W$. Excluding trivial testers, the uncertainty relation gives the trivial (zero) bound, i.e.
\begin{eqnarray}\label{eqnT}
H(\{|\chi_i\rangle\langle\chi_i|\}|V)+H(\{|\chi_i\rangle\langle\chi_i|\}|W))=0
\end{eqnarray}
for some $\{|\chi_i\rangle\langle\chi_i|\}$ if and only if they are perfectly distinguishable
\end{prop}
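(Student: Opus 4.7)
The plan is to prove both directions by exploiting the non-negativity of Shannon entropy. Since each summand on the left-hand side is non-negative, the sum vanishes iff both entropies individually vanish, and for a projective measurement onto $\{|\chi_i\rangle\}$ this forces each of the outcome distributions $\{|\langle\chi_i|V|\psi\rangle|^2\}_i$ and $\{|\langle\chi_i|W|\psi\rangle|^2\}_i$ to be a point mass. That is, $V|\psi\rangle = e^{i\alpha}|\chi_k\rangle$ and $W|\psi\rangle = e^{i\beta}|\chi_l\rangle$ for some indices $k,l$ and phases $\alpha,\beta$. This recasts the entropic equality as a purely geometric statement comparing $V|\psi\rangle$ and $W|\psi\rangle$ in the chosen measurement basis.

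For the ``if'' direction, I would start from an input $|\psi\rangle$ realising $\langle\psi|W^\dagger V|\psi\rangle=0$, which exists by perfect distinguishability, set $|\chi_1\rangle \equiv V|\psi\rangle$ and $|\chi_2\rangle \equiv W|\psi\rangle$, and complete to an orthonormal basis via Gram--Schmidt on the orthogonal complement. Both entropies then vanish by construction. Non-triviality follows because $WV^\dagger$ maps $|\chi_1\rangle$ to $|\chi_2\rangle$ (up to a phase) rather than fixing it up to a phase, so $|\chi_1\rangle$ is not an eigenvector of $WV^\dagger$ and the measurement basis is not the $WV^\dagger$-eigenbasis.

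For the ``only if'' direction, I would split the geometric conclusion into two cases. If $k\neq l$, then $\langle\psi|W^\dagger V|\psi\rangle = e^{i(\alpha-\beta)}\langle\chi_l|\chi_k\rangle=0$, yielding perfect distinguishability. If instead $k=l$, then $V|\psi\rangle$ and $W|\psi\rangle$ are colinear and $WV^\dagger|\chi_k\rangle = e^{i(\beta-\alpha)}|\chi_k\rangle$, so $|\chi_k\rangle$ is an eigenvector of $WV^\dagger$ and $V$ and $W$ produce identical outcomes on this input; this is precisely the degenerate behaviour captured by the trivial-tester construction of the preceding discussion, and is excluded by hypothesis. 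The main obstacle I anticipate is reconciling the $k=l$ branch with the paper's definition of ``trivial tester'': strictly read, that definition seems to ask for the full measurement basis to diagonalise $WV^\dagger$, whereas my argument only forces one basis vector to be an eigenvector. The cleanest resolution is to interpret ``trivial'' operationally, as any tester whose input makes $V|\psi\rangle$ and $W|\psi\rangle$ coincide up to a phase, in which case the exclusion fits the argument without extra work.
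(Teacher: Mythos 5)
Your proof is correct and follows essentially the same route as the paper's: zero total entropy forces both outcome distributions to be point masses, so $V|\psi\rangle$ and $W|\psi\rangle$ each land on a basis vector, the case $k\neq l$ gives $\langle\psi|V^\dagger W|\psi\rangle=0$ and hence perfect distinguishability, and the converse is the same construction run backwards. The subtlety you flag in the $k=l$ branch is genuine but is equally present in the paper's own proof, which simply asserts $|\langle\chi_i|WV^\dagger|\chi_j\rangle|=1$ for some $i\neq j$ ``having excluded the trivial tester''; since a tester can have a single measurement vector that is an eigenvector of $WV^\dagger$ without the whole basis diagonalising it, both arguments need your broader operational reading of ``trivial'' (any tester whose input makes $V|\psi\rangle$ and $W|\psi\rangle$ coincide up to phase) for the only-if direction to close.
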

\begin{proof}

A pair of unitary $V$ and $W$ being perfectly distinguishable implies the overlap between $V|\chi_j\rangle$ and $W|\chi_j\rangle$ for some $W|\chi_j\rangle$ is given by  $|\langle\chi_j|WV^\dagger |\chi_j\rangle|=0$. Hence, there exists a $\{|\chi_i\rangle\langle\chi_i|\}$, such that $|\langle\chi_j|WV^\dagger |\chi_i\rangle|=1$. This results in $H(\{|\chi_i\rangle\langle\chi_i|\}|V)+H(\{|\chi_i\rangle\langle\chi_i|\}|W)=0$.
\\
\\
Proving the other way, it is obvious that a pair of unitary operators, $V$ and $W$, which gives the minimal (zero) bound implies $H(\{|\chi_i\rangle\langle\chi_i|\}|V)=H(\{|\chi_i\rangle\langle\chi_i|\}|W)=0$ and $|\langle\chi_i|WV^\dagger |\chi_j\rangle|=1$ for some $i\neq j$ (having excluded the trivial tester), thus $|\langle\chi_i|V^\dagger W|\chi_i\rangle|=0$ which implies $V$ and $W$ can be perfectly distinguished. 

\end{proof}
\noindent This of course means that any two unitary operators, $V$ and $W$ such that $\Tr{(V^\dagger W)}=0$; i.e orthogonal, and thus distinguishable would saturate the minimal bound for some $\{|\chi_i\rangle\langle\chi_i|\}$. Such pairs would imply the existence of some measurement operator, $\{|\chi_i\rangle\langle\chi_j|\}$, such that   $\{V^\dagger|\chi_i\rangle\langle\chi_i|V\}$ and $\{W^\dagger|\chi_i\rangle\langle\chi_i|W\}$ are compatible.

As an example, the unitary operators $P,Q\in SU(d)$ (as defined in Ref.\cite{maasar} with $j,k=-[d/2],...,[(d-1)/2]$) such that $PQ=QPe^{2i\pi/d}$ given as
 \begin{eqnarray}
 P=\sum_{j}e^{i2\pi j/d}|a_j\rangle\langle a_j|~,~ Q=\sum_{k}^{}e^{-i2\pi k/d}|b_k\rangle\langle b_k|
 \end{eqnarray} 
 with the two orthonormal basis $\{|a_j\rangle\}$ and $\{|b_k\rangle\}$ related based by the Discrete Fourier transform, 
 \begin{eqnarray}
 |b_k\rangle=\sum_{j}e^{i2\pi jk/d}|a_j\rangle
 \end{eqnarray}
are orthogonal (it can easily be shown that $  \Tr{(PQ^\dagger)}=0$), therefore distinguishable and  $H(P)+H(Q)=0$. 
 If we consider a basis of unitary operators (for some space of operators), the orthogonality of the operators imply that they are all in fact compatible, in the sense discussed above. This is notwithstanding the fact that they may be non-commuting. Commutation (or otherwise) of operators has no bearing on its compatibility within our notion of uncertainty of unitary operators and is only meaningful in the context of operators as observables. Owing to the different frameworks subscribed when considering the roles of the operators, the Pauli operators, say, for $SU(2)$ which are non-commuting are thus non-compatible in the context of observables though compatible as unitary operators.

Let us consider how this can be related to the trivial entropic bound of ref.\cite{annals}. Let $T_1$ and $T_2$ be two testers and if they both satisfy eq.(\ref{eqnT}), then, rewriting $U_1$ for $V$ and $U_2$ for $W$  
we have, 
\begin{eqnarray}
H(T_1|U_1)+H(T_1|U_2)+H(T_2|U_1)+H(T_2|U_2)=0
\end{eqnarray}
which implies
\begin{eqnarray}
H(T_1|U_i)+H(T_2|U_i)=0, ~i=1,2.
\end{eqnarray}
By further considering the notion of complete set of testers \cite{annals} and assuming the above to also hold for the set of operators $\{U_1,U_2,...,U_d\}$, the proposition 5 related to trivial bounds of ref.\cite{annals} follows.

\subsection{The Maximal Bound; MUUBs}
\noindent Given the measurement of our tester results in any one of $d$ possible outcomes, the maximal bound, $\log{d}$, can be achieved for a pair of unitary operators $V$ and $W$ if $WV^\dagger|\chi_j\rangle$ and $|\chi_k\rangle$ are mutually unbiased for all $j,k$. 

However, when this holds true for unitary operators pairwise taken from two distinct bases, it has some interesting implications on the inner-product of the operators. We summarise it in the following proposition:
  \begin{prop}\label{prop2}
Let $\mathfrak{V}=\{V_i\}$ and $\mathfrak{W}=\{W_j\}$ be two distinct bases for some $d$-dimensional subspace for, $\mathcal{L}(\mathcal{H}_d)$, the space of linear operators acting on a $d$-dimensional Hilbert space. $\mathfrak{V}$ and $\mathfrak{W}$ are mutually unbiased, i.e. $|\Tr{(V_n^\dagger W_m)}|=\sqrt{d}$ if for any $m,n$, the unitary pair $V_n$ and $W_m$ saturates the maximal bound for some tester, $T_{m,n}$.
\end{prop}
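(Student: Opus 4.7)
The plan is to convert the saturation hypothesis into a pointwise upper bound on the Hilbert--Schmidt inner products $\Tr(V_n^\dagger W_m)$, and then combine that bound with the orthonormality of the unitary basis $\{V_n\}$ to force equality throughout.

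First I would expand $W_m$ in the basis $\{V_n\}$ of the $d$-dimensional subspace. Since each $V_n$ is unitary we have $\Tr(V_n^\dagger V_n) = d$, and with orthogonality of the basis this fixes the Hilbert--Schmidt normalisation $\Tr(V_i^\dagger V_j) = d\,\delta_{ij}$. Writing $W_m = \sum_{k=0}^{d-1} c_{mk}\,V_k$, the coefficients are $c_{mn} = \Tr(V_n^\dagger W_m)/d$, and unitarity of $W_m$, which gives $\Tr(W_m^\dagger W_m) = d$, yields
\begin{equation*}
\sum_{k=0}^{d-1} |c_{mk}|^2 = 1 .
\end{equation*}
The claim $|\Tr(V_n^\dagger W_m)| = \sqrt{d}$ is therefore equivalent to $|c_{mn}|^2 = 1/d$ for every pair $(m,n)$.

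Next I would unpack the saturation hypothesis. For $(V_n, W_m)$ to saturate the maximal entropic bound $\log d$ through some tester $T_{m,n} = (|\psi^{(m,n)}\rangle,\{|\chi_i^{(m,n)}\rangle\langle\chi_i^{(m,n)}|\})$, the right-hand side of eq.~(\ref{ent}) must itself equal $\log d$, i.e. $\max_{i,j}|\langle\chi_i^{(m,n)}|W_m V_n^\dagger|\chi_j^{(m,n)}\rangle|^2 = 1/d$. Since the rows of the unitary $W_m V_n^\dagger$ represented in this basis have squared moduli summing to unity over $d$ entries, this forces every entry, in particular each diagonal one, to satisfy $|\langle\chi_i^{(m,n)}|W_m V_n^\dagger|\chi_i^{(m,n)}\rangle| = 1/\sqrt{d}$. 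Evaluating the trace in that basis and applying the triangle inequality then gives
\begin{equation*}
|\Tr(V_n^\dagger W_m)| = |\Tr(W_m V_n^\dagger)| \leq \sum_{i=0}^{d-1} |\langle\chi_i^{(m,n)}|W_m V_n^\dagger|\chi_i^{(m,n)}\rangle| = d \cdot \frac{1}{\sqrt{d}} = \sqrt{d},
\end{equation*}
which translates into the pointwise upper bound $|c_{mn}|^2 \leq 1/d$ for every $(m,n)$.

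Finally I would promote this inequality to equality. For fixed $m$, the $d$ nonnegative numbers $|c_{m0}|^2,\dots,|c_{m,d-1}|^2$ each do not exceed $1/d$ yet sum to $1$, which forces each of them to equal $1/d$ exactly. Hence $|\Tr(V_n^\dagger W_m)|^2 = d^2|c_{mn}|^2 = d$ for all $m,n$, which is precisely the MUUB condition of eq.~(\ref{MUUBdef}) with $\kappa = d$. The only conceptual subtlety I anticipate is that each pair $(m,n)$ comes equipped with its own tester basis; this is harmless because the triangle-inequality step only needs the local Hadamardness of $W_m V_n^\dagger$, so one is free to evaluate the trace in whichever basis makes that operator Hadamard. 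The main care lies in fixing the Hilbert--Schmidt normalisation consistently across the two bases; once that is done, the rest is a clean pointwise estimate combined with a pigeonhole-style saturation.
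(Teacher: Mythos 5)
Your proposal is correct and follows essentially the same route as the paper's proof: expand $W_m$ in the basis $\{V_k\}$, use the triangle inequality on the trace evaluated in the tester basis to get the pointwise bound $|\Tr(W_m V_n^\dagger)|\leq\sqrt{d}$, and combine it with the normalisation $\sum_k|\alpha_k|^2=1$ (your $\sum_k|c_{mk}|^2=1$) to force equality for every $n$. Your explicit justification that the maximum overlap equalling $1/d$ forces all entries of $W_mV_n^\dagger$ to have modulus $1/\sqrt{d}$, and your remark that a different tester basis for each pair $(m,n)$ is harmless, are both points the paper leaves implicit.
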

\begin{proof}
Let $H(\{|\chi_i\rangle\langle\chi_i|\}|V_n)+H(\{|\chi_i\rangle\langle\chi_i|\}|W_m))=-\log{1/d}$, i.e. $V_n$ and $W_m$ saturate the maximal bound for any $m,n$ given some tester,  $T_{m,n}$. This condition is met if and only if $\{V_n^\dagger|\chi_i\rangle\}$ and $\{W_m^\dagger|\chi_i\rangle\}$ are mutually unbiased; i.e. $\forall i,j, |\langle\chi_i|W_mV_n^\dagger|\chi_j\rangle|=1/\sqrt{d}$.
The trace of the operator $W_mV_n^\dagger$ can be written as 
\begin{eqnarray}
\Tr{(W_mV_n^\dagger)}=\sum_i\langle\chi_i|W_mV_n^\dagger|\chi_i\rangle
\end{eqnarray}
 and 
 \begin{eqnarray}\label{inequal1}
 \sum_i|\langle\chi_i|W_mV_n^\dagger|\chi_i\rangle|\geq |\sum_i\langle\chi_i|W_mV_n^\dagger|\chi_i\rangle|=|\Tr{(W_mV_n^\dagger)}|
\Rightarrow |\Tr{(W_mV_n^\dagger)}|\leq \sqrt{d}.
\end{eqnarray}
We will now demonstrate how $|\Tr{(W_mV_n^\dagger)}|$ is in fact necessarily equal to $\sqrt{d}$.
 Let us write $W_m=\sum\alpha_iV_i$ and thus $|\Tr{(W_mV_n^\dagger)}|^2=|\alpha_n|^2d^2$. However note that 
   \begin{eqnarray}
   |\Tr{(W_mW_m^\dagger)}|=d(\sum_i|\alpha_i|^2)=d\Rightarrow \sum_i|\alpha_i|^2=1.  
    \end{eqnarray}
    Therefore,
  \begin{eqnarray}\label{equality1}
\sum_k|\Tr{(W_mV_k^\dagger)}|^2=d^2\sum|\alpha_k|^2=d^2
\end{eqnarray}
From  (\ref{inequal1}) earlier, we have $|\Tr{(W_mV_n^\dagger)}|\leq \sqrt{d}$, for which squaring both sides of the inequality gives $|\Tr{(W_mV_n^\dagger)}|^2\leq d$ and $\sum_k|\Tr{(W_mV_k^\dagger)}|^2\leq d^2$. The requirement of equation (\ref{equality1}) implies $\forall k, |\Tr{(W_mV_k^\dagger)}|^2= d$ or $|\Tr{(W_mV_n^\dagger)}|=\sqrt{d}$.
 \end{proof}
 The proposition in a nutshell describes how any operator pair formed with elements each taken from two unitary operator basis, which are maximally incompatible implies that the two operator bases are in fact mutually unbiased. 

Our notation of tester, $T_{m,n}$, in the above proposition reflects the possibility of having testers of differing inputs dependent on the different values of $m,n$, despite the common measurement operators. Let us consider how we can define the tester. Let $W_mV_n^\dagger|\chi_i\rangle=|\zeta_i^{(m,n)}\rangle$ where $|\langle \chi_i|\zeta_i^{(m,n)}\rangle|=1/\sqrt{d}$. We further denote $|\psi_i^{(n)}\rangle=V_n^\dagger|\chi_i\rangle$, implying $V_n|\psi_i^{(n)}\rangle=|\chi_i\rangle$. Hence, $W_m|\psi_i^{(n)}\rangle=W_mV_n^\dagger|\chi_i\rangle=|\zeta_i^{(m,n)}\rangle$. Thus, if we pick a tester, 
$T_{m,n}=(|\psi_i^{(n)}\rangle,\{|\chi_i\rangle\langle\chi_i |\})$, the saturation of the maximal entropic bound, 
$H(T_{m,n}|W_m)+H(T_{m,n}|V_n)=\log{d}$, holds with the uncertainties in testing $V_n$ and $W_m$ being zero and $\log{d}$ respectively. 

Using a similar argument to the above, we can obviously consider another tester, say, $T_{m,n}'$, such that, its input is $|\psi_i^{(m)}\rangle=W_m^\dagger|\chi_i\rangle$ and thus the uncertainty in testing $V_n$ would be maximal and $W_m$ would be zero instead. We can therefore write
\begin{eqnarray}
H(T_{m,n}|W_m)+H(T_{m,n}'|W_m)=
H(T_{m,n}|V_n)+H(T_{m,n}'|V_n)=\log{d}.
\end{eqnarray}
Considering the case where the above holds for any pair of testers, $T_{m,n}$ and $T_{m,n}'$, each coming from two different complete set of testers,  proposition 6 of Ref.\cite{annals} follows immediately.
 \subsection{The case for SU(2)}
 \noindent We consider here a special case for operators of SU(2), some specific unitary operators acting on the 2-dimensional Hilbert space. Its simple structure would allow us to explicitly see the features of the uncertainty for selected pairs of operators using testers as described earlier.
 
We consider two pairs of operators; the identity, $I$ with the Pauli operator, $\sigma_y$, and the identity with $(I-i \sigma_y)/\sqrt{2}$. The projective measurement operators in both cases are given by measurement $\{|\chi_i\rangle\langle\chi_i|\}$ with 
\begin{eqnarray}|\chi_1\rangle=\cos{\theta}|0\rangle+e^{i\phi}\sin{\theta}|1\rangle~,~
|\chi_2\rangle=-\sin{\theta}|0\rangle+e^{i\phi}\cos{\theta}|1\rangle
\end{eqnarray}
with $\theta,\phi\in[0,\pi]$.  Now, the overlaps between $\sigma_y|\chi_i\rangle$ and $|\chi_j\rangle$ for $i,j=1,2$ are given as,
  \begin{eqnarray}  
|\langle \chi_i |I  \sigma_y |\chi_j \rangle|^2=
  \begin{cases}
                                 \sin^2 2 \theta \sin^2  \phi & \text{, $i=j$} \\
                                  \cos^4 \theta+ 2 \cos^2  \theta \cos^2 2 \phi \sin^2 \theta+\sin^4 \theta  & \text{, $i \neq j$} \\
  
  \end{cases}
\end{eqnarray}
and that for $I$ and $\Omega=(I-i\sigma_y)/\sqrt{2}$ 
\begin{eqnarray}
  |\langle \chi_i |I ((I-i \sigma_y)/\sqrt{2})^{\dagger} |\chi_j \rangle|^2=
  \begin{cases}
                                   [1+\sin^2 (2 \theta)\sin^2 (\phi)]/2   & \text{, $i=j$} \\
                                    [1-\sin^2 (2 \theta)\sin^2 (\phi)]/2 & \text{, $i \neq j$} \\
  
  \end{cases}
\end{eqnarray}

\noindent We plot below in Fig. 1 (a) and Fig. 2 (a) respectively, the graphs  for $\max_{i,j}{|\langle\chi_i|I\sigma_y|\chi_j\rangle|^2}$ and $\max_{i,j}{|\langle\chi_i|(I-i \sigma_y)^{\dagger}/\sqrt{2}|\chi_j\rangle|^2}$ as functions of $\theta,\phi$. Additionally, Fig. 1 (b) and Fig. 2 (b) respectively illustrates $|\langle\chi_i|I\sigma_y|\chi_i\rangle|^2$ and $|\langle\chi_i|(I-i \sigma_y)^{\dagger}/\sqrt{2}|\chi_i\rangle|^2$, highlighting the overlap between the states evolved from the different operators in each pair. 

\begin{figure}[h]
\subfloat[]{\includegraphics[width=0.45\textwidth]{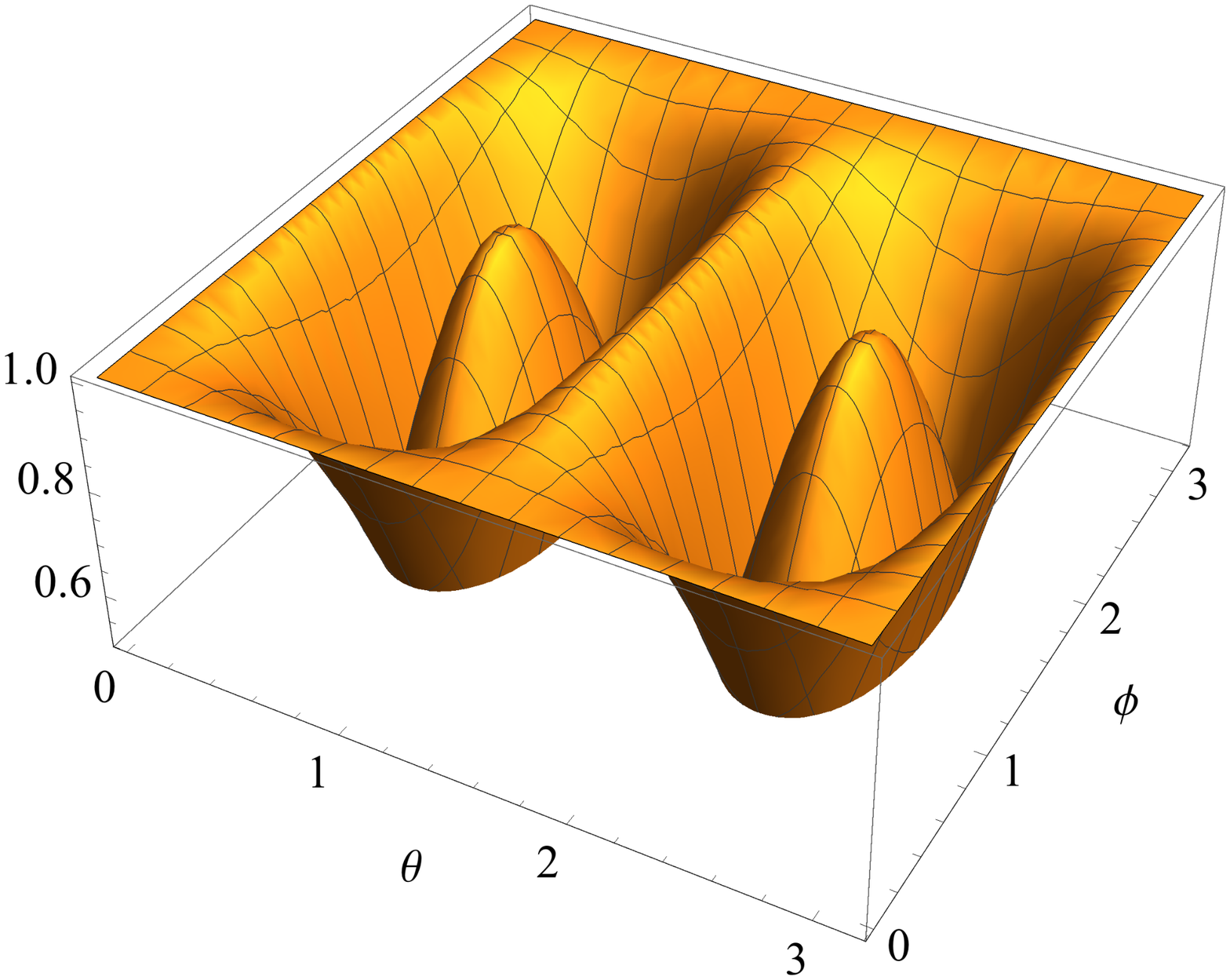}} 
\subfloat[]{\includegraphics[width=0.45\textwidth]{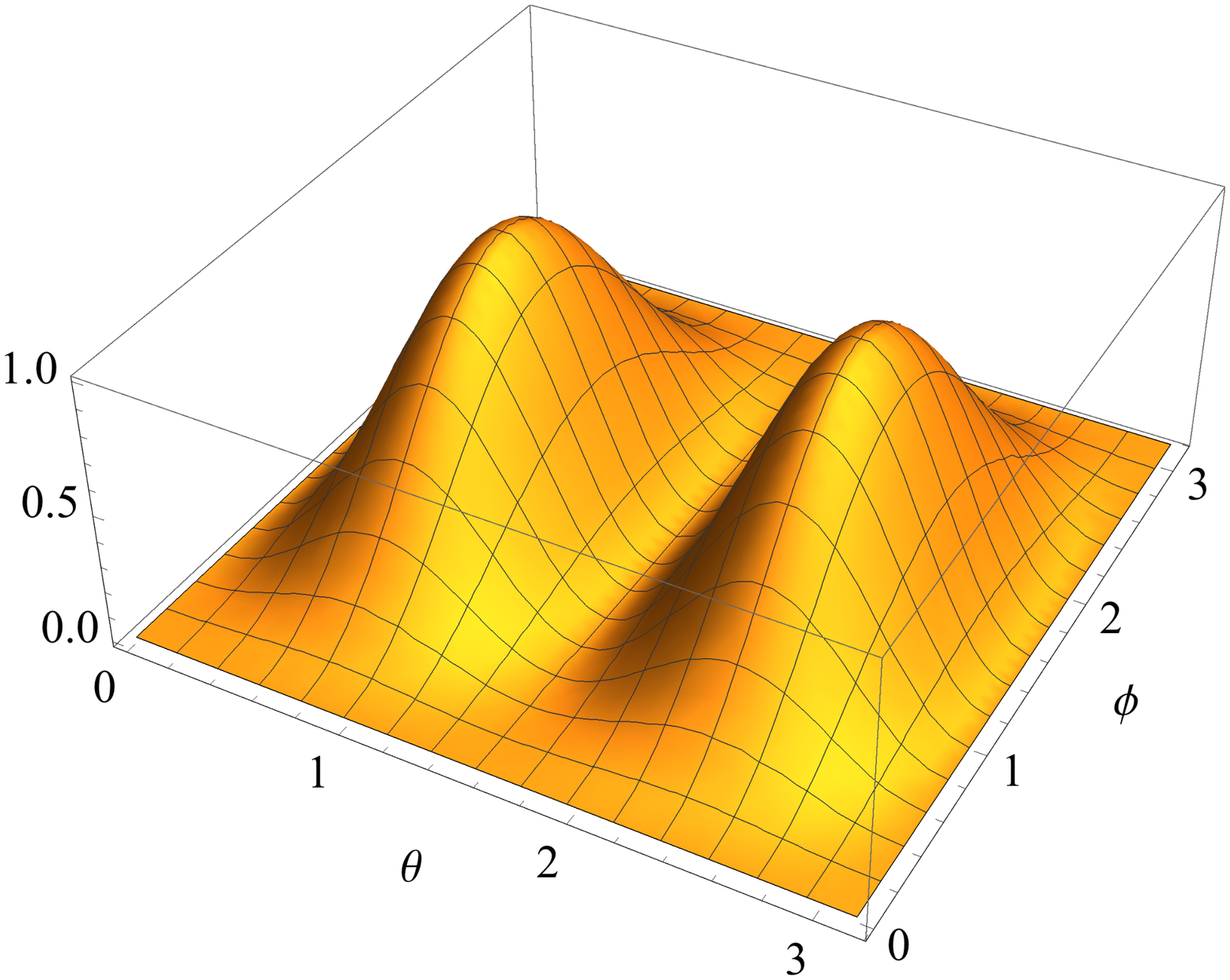}}
\caption{ The plots (a) for  $\max_{i,j}{|\langle\chi_i|I\sigma_y|\chi_j\rangle|^2}$ and (b) for $|\langle\chi_i|I\sigma_y|\chi_i\rangle|^2$ for any $i=1,2$,  are against the angles $\theta$ and $\phi$.} 
\label{some example}
\end{figure}

\begin{figure}[h]
\subfloat[]{\includegraphics[width=0.45\textwidth]{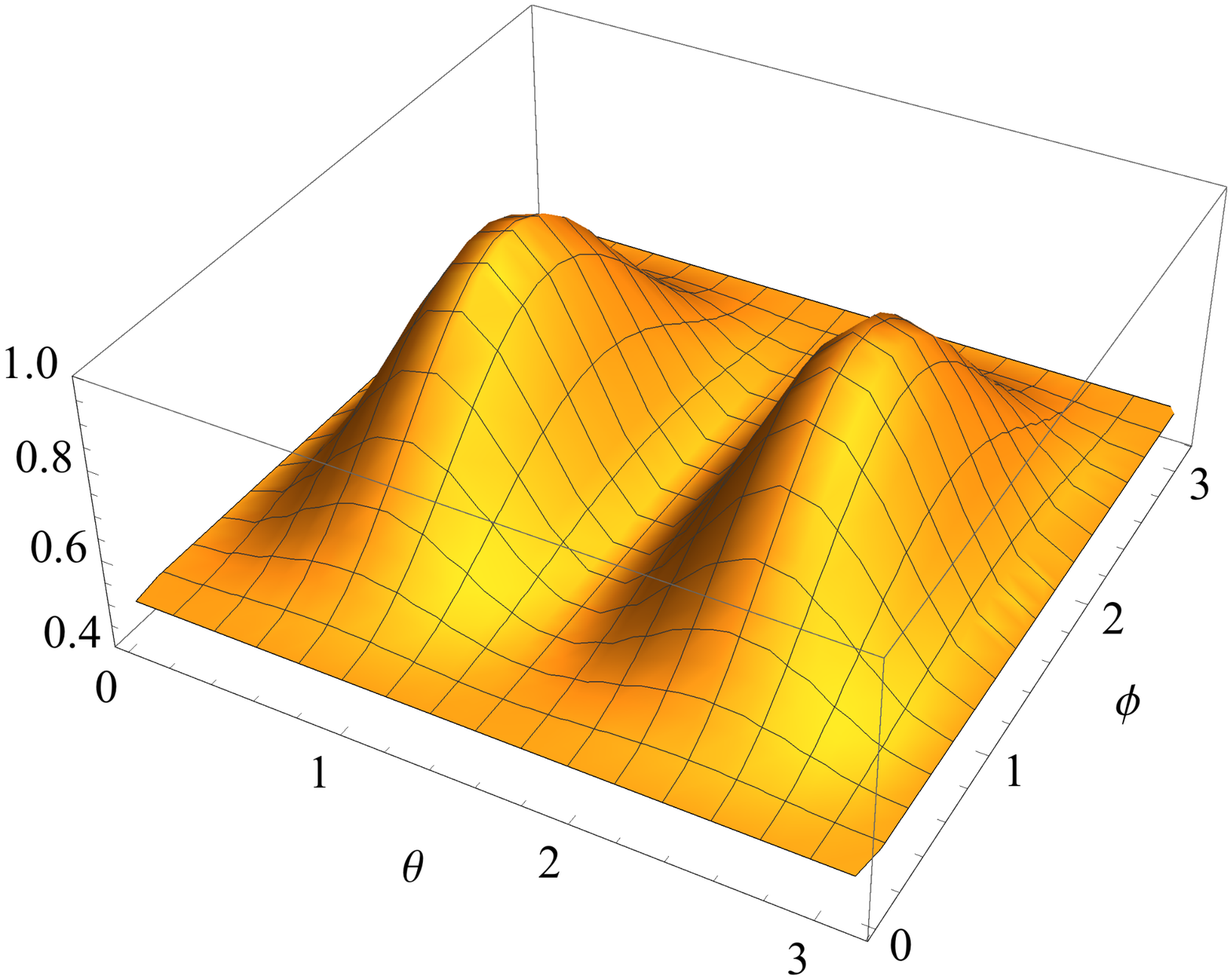}} 
\subfloat[]{\includegraphics[width=0.45\textwidth]{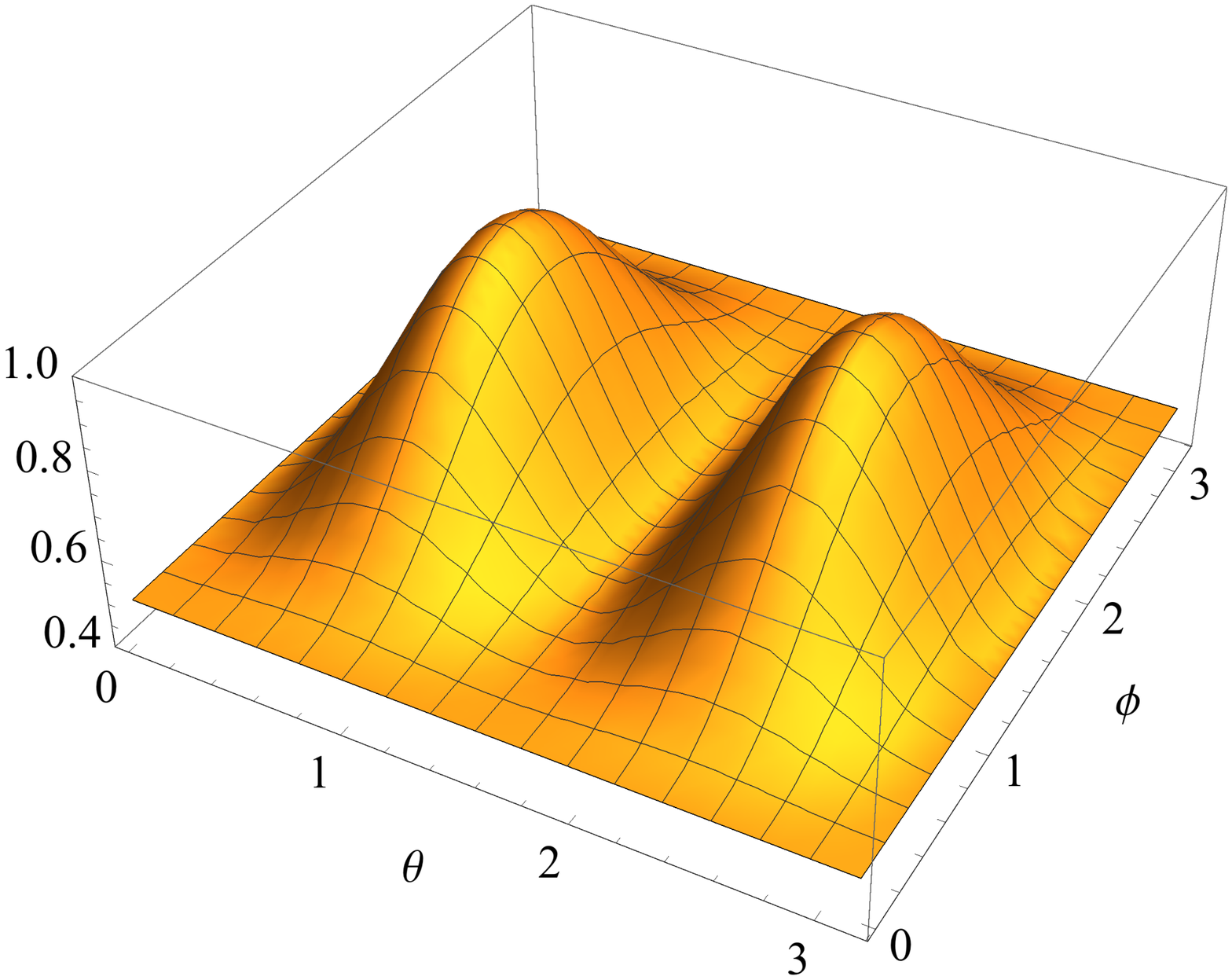}}
\caption{The plots (a) for  $\max_{i,j}{|\langle\chi_i|(I-i \sigma_y)^{\dagger}/\sqrt{2}|\chi_j\rangle|^2}$ and  (b) for $|\langle\chi_i|(I-i \sigma_y)^{\dagger}/\sqrt{2}|\chi_i\rangle|^2$ for any $i=1,2$, are against the angles $\theta$ and $\phi$.  In both the plots here, one can note the minimal value as $0.5$. The graphs are essentially identical as $\max_{i,j}{|\langle\chi_i|(I-i \sigma_y)^{\dagger}/\sqrt{2}|\chi_j\rangle|^2}=|\langle\chi_i|(I-i \sigma_y)^{\dagger}/\sqrt{2}|\chi_i\rangle|^2$.}
\label{some example}
\end{figure}
The graph of Fig1(a) demonstrates that $\max_{i,j}{|\langle\chi_i|I\sigma_y|\chi_j\rangle|^2}$, achieves the maximal value 1 (hence zero uncertainty) at the `edges' as well as at the centres. The edges correspond to projectors that project onto the states $\cos{\theta}|0\rangle+\sin{\theta}|1\rangle, \forall \theta$, i.e. those lying on the equator of the Poincare sphere. The edges also include the case for $\theta=0,\pi$ for any possible value of $\phi$; which are really the states of the computational basis. In these cases, one can in fact use any of these states as the input state to form the saturating tester. 

We can observe that $\max_{i,j}{|\langle\chi_i|I\sigma_y|\chi_j\rangle|^2}$, decreases in value (increases in uncertainty) up to 0.5 before taking a turn to increase to 1 at the centres. This increase is contributed by the scenario where $\max_{i,j}{|\langle\chi_i|I\sigma_y|\chi_j\rangle|^2}=|\langle\chi_i|I\sigma_y|\chi_i\rangle|^2$  for any $i=1,2$; as depicted in Fig.1 (b), and reflects the distinguishability of the states $\sigma_y|\chi_i\rangle$ and $I|\chi_i\rangle$.	
The maximum value at the centre correspond to the eigenvectors of $I\sigma_y$ (trivial tester) and we exclude this from our considerations.

In the context of Bob's rotated frame with respect to Alice's, this trivial bound can be interpreted as Alice measuring in the operators $\{|\chi_i\rangle\langle\chi_i|\}$ while Bob measures $\{\sigma_y|\chi_i\rangle\langle\chi_i|\sigma_y\}$. But as $|\langle\chi_j|\sigma_y|\chi_i\rangle|=1$ for some $j\neq i$, thus $|\langle\chi_i|\sigma_y|\chi_i\rangle|=0$ for all other $i$ and given the fact that we are only dealing with qubits here, $\{\sigma_y|\chi_i\rangle\langle\chi_i|\sigma_y\}=\{|\chi_i\rangle\langle\chi_i|\}$.

The graph of  Fig. 2(a) on the other hand demonstrates that, for any state, $|\chi_i\rangle$, lying on the equator of the Poincare sphere, $\max_{i,j}{|\langle\chi_i|(I-i \sigma_y)^{\dagger}/\sqrt{2}|\chi_j\rangle|^2}= 1/2$. The overlap between $(I-i\sigma_y)^\dagger/\sqrt{2}|\chi_i\rangle$ and $I|\chi_i\rangle$ for any $i=1,2$ as shown in Fig. 2(b) has its its minimal value of $1/2$ for those states and becomes less distinguishable otherwise. The graphs are essentially identical due to the fact that the contribution for $\max_{i,j}{|\langle\chi_i|(I-i \sigma_y)^{\dagger}/\sqrt{2}|\chi_j\rangle|^2}$ comes solely from $|\langle\chi_i|(I-i \sigma_y)^{\dagger}/\sqrt{2}|\chi_i\rangle|^2$ for any $i=1,2$. We also note in this case, a peak at the centre which corresponds to the zero uncertainty scenario happens for the case of trivial testers. 
It is instructive to note that, similar graphs can be drawn for the operator pairs $\sigma_y$ and $(I-i \sigma_y)/\sqrt{2}$, $\sigma_y$ and $(I+i \sigma_y)/\sqrt{2}$  as well as  $I$ and $(I+i \sigma_y)/\sqrt{2}$, demonstrating the connection between the saturation of the maximal bound to MUUBs. The bases $\{I,\sigma_y\}$ and $\{(I-i \sigma_y)/\sqrt{2},(I+i \sigma_y)/\sqrt{2}\}$ for the subspace of operators of SU(2) are in fact mutually unbiased. 
\section{Generalisations}
\noindent An immediate extension of the earlier sections would be the case where the tester uses a bipartite maximally entangled state (MES), $|\Psi\rangle\in\mathcal{H}_d\otimes\mathcal{H}_d$, as input and the measurement operators, $\{|\nu_i\rangle\langle\nu_i|\}$, would project onto some MES as well. This is to avoid any contribution of uncertainty due to measurements that project an unitarily evolved maximally entangled input onto non maximally entangled states\footnote{a measurement which projects a MES onto separable states would result in maximal uncertainty independently of the unitary operators involved}. This does not modify the preceeding results much beyond the following:
\begin{enumerate}
\item Trivial testers do not exist. This is due to the fact that the operator $WV^\dagger\otimes I_d$ for any $W,V\in SU(d)$ is an eigenoperator for a MES only if $WV^\dagger$ is some multiple of $I_d$ and only thus can $|\langle \nu_i|WV^\dagger\otimes I_d|\nu_i\rangle|$ be equal to unity.
\item Proposition \ref{prop2} remains except that $\mathfrak{V}=\{V_i\}$ and $\mathfrak{W}=\{W_j\}$ are now two distinct bases for the $d^2$-dimensional space of operators and $|\Tr{(W_mV_n^\dagger)}|=1$, for any $W_m\in\mathfrak{W},V_n\in\mathfrak{V}$ inline with the MUUB definition for the space. 
\end{enumerate}
As these are rather straightforward, the explicit demonstration is referred to only in the Appendix section for the interested reader.

Another immediate direction of generalisation would be the straightforward case for using testers with POVMs for measurement purposes. Similar to the earlier section, we will see how it reduces to uncertainty relations for measurements by two parties with rotated POVMs. Let $\rho$ be the input and $\{ M_k\}$ as the POVM for tester $T$. The average uncertainty now gives,
\begin{align}
H(V|T)+H(W|T)=-\sum_{k} \text{Tr}[M_k V^{\dagger} \rho V] \log  \text{Tr}[M_k V^{\dagger} \rho V]
\nonumber\\
-\sum_{k} \text{Tr}[M_k W^{\dagger} \rho W] \log  \text{Tr}[M_k W^{\dagger} \rho W]\nonumber\\
=-\sum_{k} \text{Tr}[VM_k V^{\dagger} \rho] \log  \text{Tr}[VM_k V^{\dagger} \rho]
\nonumber\\
-\sum_{k} \text{Tr}[WM_k W^{\dagger} \rho] \log  \text{Tr}[WM_k W^{\dagger} \rho]
\end{align} 
Hence, for any $\rho$, this becomes the entropic bound for POVMs \cite{sankhya}, $\mathcal{M}_V=\{M_i^{(V)}~|~M_i^{(V)}=VM_i V^{\dagger}\}$ and $\mathcal{M}_W=\{M_j^{(W)}~|~M_j^{(W)}=WM_j W^{\dagger}\}$,
\begin{align}
H(\mathcal{M}_V)+H(\mathcal{M}_W)\ge -2 \log{\max_{i,j} \|\sqrt{M_i^{(V)}}\sqrt{M_j^{(W)}}} \|.
\end{align} 
where $\| \cdot \|$ is the operator norm.

\section{Conclusion}

\noindent The notion of uncertainty pairs of unitary operators in terms of variances derived quite straightforwardly from those related to operator observables is certainly not without its charm. Aiming to sideline the possible ambiguity in such a formulation, especially when dealing with pairs that deterministically evolved states which are completely distinguishable, we have revised such notion of uncertainty to one based on entropy using testers. 
Formulated within the framework of a guessing game, the uncertainty in testing a unitary operator pair,  $V$ and $W$, reflects the inability of one party to prepare a tester from which conclusive statements an be made for both the operators.

We immediately see how operators which are completely distinguishable; i.e. orthogonal are in fact compatible in the sense that it saturates the minimal entropic bound. It essentially implies the existence of some measurement operator for which the evolution of its eigenbasis  under two compatible unitary operators are in fact perfectly distinguishable. Hence, any two operators from any common basis for some subspace of the space of operators are compatible. This also means that the (generalised) Pauli operators are actually compatible irrespective of the matter of commutation of operators.   It is thus instructive to determine what a pair of unitary operator represents; i.e. deterministic evolution or as observables  of a quantum system in order to discuss notions of incompatibility.

On the other extreme, we observe that operators derived from distinct bases for some subspace of the space of operators saturating the maximal bound are in fact mutually unbiased to one another. This is true for both the $d$ as well as the $d^2$ dimensional subspaces $\mathcal{L}(\mathcal{H}_d)$.

The framework we have described in terms of the uncertainty relation with testers provides a more obvious and practical meaning to the idea of `compatibility' of the operators. It rightfully delegates the focus of uncertainty to measurement in the testers themselves and not the unitary operation which are deterministic.  Its physical implication can also be captured in terms of independent measurements made by the parties Alice and Bob for which their measurement operators are connected by some unitary relation. In a nutshell, incompatibility of unitary operators should be viewed in terms of its action on measurement operators, and any uncertainties hence, are derived from subsequent measurements. 

It's worth noting the relation the current work has to Ref.\cite{annals}, where the latter considers the entropic uncertainty relation involved when a unitary is tested by one of two choices of testers. Our propositions in the current work, as we have demonstrated, can actually be understood as primitives for some of the results of Ref.\cite{annals}.


\section{Acknowledgement}

\noindent J. S. S. and R. M. N would like to acknowledge financial support under the project FRGS19-030-0639 from the Ministry of Higher Education's Fundamental Research Grant Scheme and the University's Research Management Centre (RMC) for their support and facilities provided. J. S. S would also like to extend a very special thanks to Su'aidah (RMC, Kuantan) for her kind assistance and support. S. M. acknowledges the financial support of the Future and Emerging Technologies (FET) programme, within the Horizon-2020 Programme of the European Commission, under the FET-Open grant agreement QUARTET, number 862644.


\section{Appendix}

\begin{prop}\label{K0}
Consider $\mathfrak{S}=\{S_i\}$ and $\mathfrak{R}=\{R_j\}$ be two distinct bases for the $d^2$-dimensional space of operators acting on the $d$-dimensional Hilbert space. A pair of unitary operators, $S_p$ and $R_q$ saturating the maximal bound for any $p,q$, implies that $\mathfrak{S}$ and $\mathfrak{R}$ are mutually unbiased, i.e. $| \text{Tr} (S_pR_{q}^{\dagger})|=1$.
\end{prop}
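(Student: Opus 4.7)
The plan is to adapt the reasoning of Proposition~\ref{prop2} to the maximally-entangled-state (MES) tester setting described in the generalisations section, exploiting the fact that the MES overlap structure makes $\Tr(S_p R_q^\dagger)$ appear directly as a diagonal matrix element of $R_q S_p^\dagger\otimes I_d$ in the measurement basis.

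First, I would set up the parametrisation. Every orthonormal MES basis $\{|\nu_i\rangle\}$ of $\mathcal{H}_d\otimes\mathcal{H}_d$ admits a representation $|\nu_i\rangle=(U_i\otimes I_d)|\Phi_+\rangle$, with $|\Phi_+\rangle=\frac{1}{\sqrt{d}}\sum_j|j\rangle|j\rangle$ and $U_i$ unitary, by the Schmidt decomposition of any MES. The joint space has dimension $d^2$, so the maximal attainable value of $H(T|S_p)+H(T|R_q)$ is $\log d^2$. By the same Maassen--Uffink argument used in the main text, saturation of this bound for the pair $S_p,R_q$ forces the bases $\{|\nu_i\rangle\}$ and $\{(R_qS_p^\dagger\otimes I_d)|\nu_i\rangle\}$ to be mutually unbiased in $\mathcal{H}_d\otimes\mathcal{H}_d$, i.e.
\begin{equation*}
|\langle\nu_i|(R_qS_p^\dagger\otimes I_d)|\nu_j\rangle|^2=\tfrac{1}{d^2}\qquad\forall\,i,j.
\end{equation*}

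Next, I would use the standard MES identity $\langle\Phi_+|(A\otimes I_d)|\Phi_+\rangle=\Tr(A)/d$ to collapse the diagonal overlaps. Writing $X=R_qS_p^\dagger$ and invoking cyclicity of the trace,
\begin{equation*}
\langle\nu_i|(X\otimes I_d)|\nu_i\rangle=\langle\Phi_+|(U_i^\dagger X U_i\otimes I_d)|\Phi_+\rangle=\tfrac{1}{d}\Tr(U_i^\dagger X U_i)=\tfrac{1}{d}\Tr(X).
\end{equation*}
The $i=j$ case of the MUB condition then reads $|\Tr(R_qS_p^\dagger)|^2/d^2=1/d^2$, so $|\Tr(R_qS_p^\dagger)|=|\Tr(S_pR_q^\dagger)|=1$. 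Since the hypothesis holds for every pair $p,q$, this is precisely the MUUB condition (with $\kappa=1$) for the $d^2$-dimensional operator space.

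I do not expect a genuine technical obstacle. In Proposition~\ref{prop2}, the triangle-inequality upper bound $|\Tr(W_mV_n^\dagger)|\le\sqrt{d}$ had to be tightened through the Parseval-type identity $\sum_k|\Tr(W_mV_k^\dagger)|^2=d^2$, which in turn required completeness of the basis $\mathfrak{V}$; here the MES structure makes every diagonal entry of $R_qS_p^\dagger\otimes I_d$ identically equal to $\Tr(R_qS_p^\dagger)/d$, so the MUB condition pins down the trace immediately, pair by pair, without any summation argument. The only points demanding care are bookkeeping the constants ($\log d^2$ versus $\log d$, overlaps of $1/d^2$ versus $1/d$) and recording the Schmidt-decomposition fact that every MES basis admits the form $(U_i\otimes I_d)|\Phi_+\rangle$.
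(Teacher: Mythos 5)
Your proof is correct, but it takes a genuinely different route from the paper's. The paper mirrors its Proposition \ref{prop2} argument verbatim: it first derives the upper bound $|\Tr(S_pR_q^\dagger)|\le 1$ from the triangle inequality applied to $\sum_i\langle\nu_i|(S_pR_q^\dagger\otimes \mathcal{I})|\nu_i\rangle$, and then forces equality through the Parseval-type identity $\sum_k|\Tr(S_pR_k^\dagger)|^2=d^2$, which requires expanding $S_p$ in the complete basis $\mathfrak{R}$. You instead exploit the parametrisation $|\nu_i\rangle=(U_i\otimes I_d)|\Phi_+\rangle$ and the identity $\langle\Phi_+|(A\otimes I_d)|\Phi_+\rangle=\Tr(A)/d$ to show that \emph{every} diagonal overlap equals $\Tr(R_qS_p^\dagger)/d$ exactly, so a single diagonal entry of the unbiasedness condition pins down $|\Tr(S_pR_q^\dagger)|=1$ with no summation or completeness argument. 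Your route is shorter and strictly more informative: it establishes the unit trace overlap pair by pair, so the hypothesis ``for any $p,q$'' and the basis expansion $S_p=\sum_i\alpha_iR_i$ are not needed at all, and the same rigidity of the diagonal entries immediately explains the paper's side remark that trivial testers do not exist for MES inputs (the diagonal overlap can reach unity only if $|\Tr(R_qS_p^\dagger)|=d$, i.e. $R_qS_p^\dagger\propto I_d$). What the paper's longer argument buys is uniformity --- the same triangle-inequality-plus-Parseval template handles both the $d$-dimensional case of Proposition \ref{prop2} (where no MES structure is available and the diagonal entries are genuinely unequal) and this $d^2$-dimensional case --- whereas your shortcut is specific to maximally entangled inputs.
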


\begin{proof}
Let $H(S_p)+H(R_q)=-\log{1/d^2}$,  i.e. $S_p$ and $R_q$ saturates the maximal bound for any $p,q$. saturating the maximal bound. This condition is met if and only if $\{R_q^{\dagger}  \otimes \mathcal{I} |\nu_i\rangle\}$ and $\{S_p  \otimes \mathcal{I}|\nu_j\rangle\}$ are mutually unbiased; i.e. $\forall i,j, |\langle\nu_i|S_p R_q^{\dagger}  \otimes \mathcal{I}|\nu_j\rangle|=1/d$. Note that,
\begin{align}
\Tr{(S_p R_q^{\dagger}  \otimes \mathcal{I})}&=\Tr{(S_p R_q^{\dagger})} \Tr{(\mathcal{I})} \nonumber\\
&= d \Tr{(S_p R_q^{\dagger})}
\end{align}
\noindent
Now, consider that the trace of the operator $S_p R_q^{\dagger} \otimes \mathcal{I}$ can be written as 
\begin{eqnarray}
\sum_i\langle\nu_i|S_p R_q^{\dagger}  \otimes \mathcal{I} |\nu_i\rangle=d \Tr{(S_p R_q^{\dagger})}
\end{eqnarray}
 and 
 \begin{align}\label{inequal}
 \sum_i|\langle\nu_i|S_p R_q^{\dagger}  \otimes \mathcal{I}|\nu_i\rangle| \geq   |\sum_i\langle\nu_i|S_p R_q^{\dagger} \otimes \mathcal{I} |\nu_i\rangle| =|\Tr{(S_p R_q^{\dagger} \otimes \mathcal{I}})| = d| \Tr{(S_p R_q^{\dagger})}| \Rightarrow |\Tr{(S_p R_q^{\dagger} )}|\leq 1.
\end{align}
We will now demonstrate how $|\Tr{(S_p R_q^{\dagger} )}|$ is in fact necessarily equal to $1$.\\\\
\noindent Let us write $S_p=\sum \alpha_i R_i$ and thus $|\Tr{(S_p R_q^{\dagger} )}|^2=d^2|\alpha_q|^2$. Note that 
   \begin{eqnarray}
   |\Tr{(S_p S_p^{\dagger})}|=d (\sum_i|\alpha_i|^2)=d \Rightarrow \sum_i|\alpha_i|^2=1.  
    \end{eqnarray}
    Therefore,
  \begin{eqnarray}\label{equality}
\sum_k|\Tr{(S_p R_k^{\dagger} )}|^2=d^2 \sum|\alpha_k|^2=d^2
\end{eqnarray}
From  (\ref{inequal}) earlier, we have $|\Tr{(S_p R_q^{\dagger} )}|\leq 1$, for which squaring both sides of the inequality gives $|\Tr{(S_p R_q^{\dagger} )}|^2\leq 1$ and $\sum_k|\Tr{(S_p R_k^{\dagger}  )}|^2\leq d^2$. The requirement of equation (\ref{equality}) implies $\forall k, |\Tr{(S_p R_k^{\dagger})}|^2= 1$. 
 \end{proof}


\end{document}